\documentclass[12pt]{article}
    \usepackage[margin=1in]{geometry}
    \usepackage[utf8]{inputenc}
    \usepackage[T1]{fontenc}    
    \usepackage{amsmath,amssymb,amsfonts,amsthm}
    \usepackage{mathtools}
    \usepackage[symbol]{footmisc}
    \usepackage{url}
    \usepackage{bigints}
    \usepackage{graphicx}
    \usepackage{algorithm2e}
    \usepackage{tabularx}
    \usepackage{xcolor}
    \usepackage{etoc}
    \usepackage{chngcntr}    
    \usepackage[figurename=Fig.]{caption}

\newtheorem{theorem}{Theorem}

\newtheorem{lemma}{Lemma}
\RestyleAlgo{ruled}

\begin{document}

\begin{titlepage}
\begin{center}

    \Large\textbf{Analytical characterisation of the Mi- and To-phases in HeMiTo dynamics: exponential growth and logistic saturation of toxic prion-like proteins}
       
    \normalsize
        
    \vspace{0.25cm}
    \setcounter{footnote}{0}
    \setlength{\footnotemargin}{0.8em}
    {\normalsize Johannes G Borgqvist\footnote{Corresponding author: \url{johborgq@chalmers.se}}\footnote{Mathematical Sciences, Chalmers University of Technology, Gothenburg, Sweden}}
    \setlength{\footnotemargin}{1.8em}
        
\abstract{Prion-like propagation of misfolded proteins is a key mechanism underlying the progression of neurodegenerative diseases such as Alzheimer’s disease. In previous work, we introduced the HeMiTo framework, describing these prion-like dynamics for a class of heterodimer models in terms of three phases: the healthy (He), mixed (Mi), and toxic (To) phases. While the He-phase was characterised analytically, the Mi-phase was described numerically and the To-phase was inferred from linear stability arguments.

In this work, we provide a complete analytical characterisation of the Mi- and To-phases for our class of heterodimer models. We derive exact inner solutions governing the Mi-phase and match them with outer solutions from the He-phase, explaining the concave-like behaviour of the healthy species and establishing explicit conditions for exponential growth of the toxic species with a mechanistically interpretable growth rate. Furthermore, we formalise a quasi steady-state reduction near the toxic steady state and show that the dynamics reduce to a logistic growth equation, linking exponential growth to saturation.

Together, these results provide a unified and mechanistic description of prion-like dynamics across all phases of disease progression and establish a foundation for predictive modelling of biomarker trajectories.}    
    
    \textbf{Keywords:} prion-like proteins, heterodimer model, perturbation analysis, exponential growth, logistic growth, mechanistic modelling.\\     
\end{center}
\end{titlepage}

\section{Introduction}
A critical component of many neurodegenerative diseases is the protein-equivalent of viruses known as \textit{prions}. In 1997, Stanley B. Prusiner in his Nobel lecture~\cite{prusiner1998prion} described them as follows.

\begin{quote}
    ``Prions are unprecedented infectious pathogens that cause a group of invariably fatal neurodegenerative diseases mediated by an entirely novel mechanism.''
\end{quote}
The then novel mechanism underlying all prion diseases involves a particular modification of the PrP prion protein, where the normal cellular form $\mathrm{PrP}^{\mathrm{C}}$ is converted by misfolding the protein into an infectious form $\mathrm{PrP}^{\mathrm{Sc}}$. The concept of protein misfolding is also central for the class of neurodegenerative diseases to which Alzheimer's disease belongs, which is called \textit{tauopathies}. Since tauopathies share many features with prion diseases and are centred on misfolding of tau-proteins, they are referred to as \textit{prion-like}~\cite{alyenbaawi2020prion}. To understand the underlying mechanisms of prion-like diseases, their infectious properties caused by protein misfolding have been the subject of mechanistic modelling.  

\begin{figure}[htbp!]
    \centering
    \includegraphics[width=0.9\linewidth]{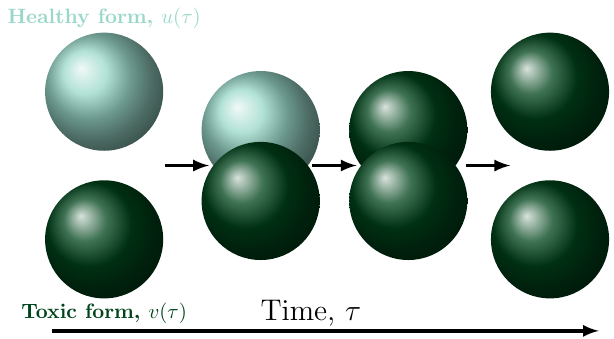}
    \caption{\textit{Infectious conversion of prion-like proteins}. When the toxic and infectious form $v(\tau)$ of the prion-like protein interacts with the healthy form $u(\tau)$, the latter gets converted into a toxic particle. This figure is adapted from Fig. 20 in~\cite{thompson2020protein}. }
    \label{fig:prion_reaction}
\end{figure}

The key reaction of prion-like diseases is the conversion of the healthy particle $u$ into its toxic and infectious counterpart $v$ (Fig. \ref{fig:prion_reaction}). Mechanistically speaking, the rate of this conversion reaction is described by the \textit{law of mass action} according to $r_{3}=\varepsilon{u}v$ where $\varepsilon$ is a rate parameter. In addition, a constant formation rate of the healthy species $r_{1}=c_{1}$ as well as linear degradation rates of each species $r_{2}=c_{2}u$ and $r_{4}=\varepsilon{v}$ are included in the simplest description of prion-like dynamics known as the \textit{heterodimer model}. Technically, the heterodimer model consists of a coupled system of two ODEs given by $\mathrm{d}u/\mathrm{d}\tau=r_{1}-r_{2}-r_{3}$ and $\mathrm{d}v/\mathrm{d}\tau=r_{3}-r_{4}$, and it has been used to model prion-like interactions of p-tau and amyloid $\beta$ in the brain~\cite{thompson2020protein} as well as the propagation of yeast prions~\cite{lemarre2020unifying}. Interestingly enough, the heterodimer model that describes the dynamics of infectious prions is also known as the \textit{SIR model with vital dynamics}~\cite{parsons:hal-04178969} in epidemiology, and thus this model structure is interesting in numerous applications. Recently, important generalisations of the heterodimer model have resulted in a class of models that describe the dynamics of distinct prion-like proteins. 

In our latest work, my co-author and I constructed a wider class of heterodimer models by generalising the term for the rate of the conversion reaction. As mentioned previously, the single toxic particle conversion rate (Fig. \ref{fig:prion_reaction}) is given by $r_{3}=\varepsilon{u}v$ and a two toxic particle conversion rate is given by $r_{3}=\varepsilon{u}v^{2}$. A general class of conversion rates is given by $r_{3}=uvf(v)$ defined by a general conversion function $f$, and mechanistically the choice of conversion function $f$ defines the mechanism underlying a specific prion-like protein of interest. Provided this framework, we constructed the following dimensionless class of heterodimer models~\cite{borgqvist2025HeMiTo}

\begin{align}    
\dfrac{\mathrm{d}u}{\mathrm{d}\tau}=&c_{1}-c_{2}u-\varepsilon{u}vf(v)\,,\label{eq:ODE_u}\\
\dfrac{\mathrm{d}v}{\mathrm{d}\tau}=&\varepsilon{v}\left(uf(v)-1\right)\,,\label{eq:ODE_v}\\
&u(\tau=0)=u_{0}\,,\quad{v(\tau=0)}=v_{0}\label{eq:IC_u_and_v}\,,
\end{align}
where $u_{0}$ and $v_{0}$ are the initial conditions of healthy and toxic species, respectively. Importantly, this dimensionless class of heterodimer models contains the perturbation parameter $\varepsilon\in(0,1)$ that allowed us to express solutions to this class of heterodimer models as series expansions of this perturbation parameter. Such perturbation series enable researchers to describe dynamics at different time scales, and here the initial dynamics corresponding to the $\mathcal{O}(1)$ terms is described by the \textit{outer solutions}, while the subsequent dynamics corresponding to the $\mathcal{O}(\varepsilon)$ terms is described by the \textit{inner solutions}~\cite{gerlee2022weak}. Using such a perturbation analysis, we described the dynamical features of prion-like proteins mathematically.

The prion-like dynamics of our class of heterodimer models are characterised by three phases captured by the acronym HeMiTo~\cite{borgqvist2025HeMiTo}. First, the system undergoes the healthy phase, known as the He-phase, where the concentration of the healthy species increases while the concentration of the toxic species remains constant. The system then enters the mixed phase, known as the Mi-phase, which marks the onset of disease progression as the toxic species increases at the expense of the healthy species. Lastly, the system undergoes the toxic phase, known as the To-phase, where the system evolves towards its toxic steady state.

In previous work, we derived exact analytical expressions for the outer solutions that describe the dynamics in the He-phase~\cite{borgqvist2025HeMiTo}. Moreover, we constructed numerical approximations of the inner solutions in the Mi-phase, suggesting that the healthy species follows a concave trajectory reaching a maximum, while the toxic species grows exponentially~\cite{borgqvist2025HeMiTo}. However, a rigorous analytical characterisation of the Mi-phase, as well as a corresponding analysis of the To-phase dynamics, remains an open problem.

In this work, we address these questions in four parts. First, we derive exact analytical expressions for the inner solutions governing the Mi-phase and match them with the outer solutions from the preceding He-phase. Second, we use these matched asymptotics to explain the perceived concavity of the healthy species and identify the conditions under which it attains a maximum. Third, we establish precise conditions under which the toxic species is well approximated by an exponential function with a mechanistically interpretable growth rate. Finally, we analyse the To-phase by showing that, under a quasi steady-state approximation, the dynamics reduce to a logistic growth equation that captures saturation towards the toxic steady state. Before presenting these results in Section~\ref{sec:res}, we briefly introduce the mathematical preliminaries in Section~\ref{sec:math}.

\section{Mathematical preliminaries}\label{sec:math}
Here, we describe the results of our previous article~\cite{borgqvist2025HeMiTo} on outer solutions describing the He-phase of our class of heterodimer models in Eqs. \eqref{eq:ODE_u} to \eqref{eq:IC_u_and_v}. These results provide a natural starting point for the inner solutions describing the Mi-phase in this work presented subsequently in Section \ref{sec:res}.

Biologically feasible dynamics\footnote{On this note on biological feasibility, all parameters are assumed to be positive and the states are assumed to be non-negative.} of the class of heterodimer model are ensured by the existence of two distinct steady states. First, the system of ODEs in Eqs. \eqref{eq:ODE_u} and \eqref{eq:ODE_v} has a \textit{healthy steady state} (HSS) given by
\begin{equation}
    \mathrm{HSS}=\left(u_{1}^{\star},v_{1}^{\star}\right)=\left(\frac{c_{1}}{c_{2}},0\right)\,.
    \label{eq:HSS}
\end{equation}
Second, if there exists a $v_{2}^{\star}>0$ such that $f(v_{2}^{\star})>0$ that solves
\begin{equation}
    v_{2}^{\star}=\frac{1}{\varepsilon}\left(c_{1}-\frac{c_{2}}{f(v_{2}^{\star})}\right)\,,
    \label{eq:v_2_star}
\end{equation}
then the system has a \textit{toxic steady state} (TSS) given by
\begin{equation}
    \mathrm{TSS}=\left(u_{2}^{\star},v_{2}^{\star}\right)=\left(\frac{1}{f(v_{2}^{\star})},\frac{1}{\varepsilon}\left(c_{1}-\frac{c_{2}}{f(v_{2}^{\star})}\right)\right)\,.
    \label{eq:TSS}
\end{equation}
Provided these steady states, biologically feasible dynamics correspond to the \textit{HSS being a saddle point} and the \textit{TSS being a stable node}. These dynamical properties are neatly expressed in terms of the conversion function $f$ together with the function $g$ given by
\begin{equation}
g(v)=\frac{c_{2}}{c_{1}-\varepsilon{v}}\,,
    \label{eq:g}
\end{equation}
and mathematically these properties correspond to the following conditions:
    \begin{equation}
        f(v)>{g(v)}\,\forall{v}\in[0,v_{2}^{\star})\,,\quad{f}(v_{2}^{\star})=g(v_{2}^{\star})\,,\quad{f}'(v_{2}^{\star})\leq{0}\,.
        \label{eq:f_bound_by_g}
    \end{equation}
For the perturbation analysis, we impose assumptions on the initial conditions in Eq. \eqref{eq:IC_u_and_v}: $u_{0},v_{0}=\mathcal{O}(1)$ and the conversion function $f$: $f(v_{0})=\mathcal{O}(1)$ and $f$ is analytical. Given these assumptions, we construct the following perturbation ans\"{a}tze 
\begin{align}
    u(\tau)&=u_{\mathrm{He}}(\tau)+u_{\mathrm{Mi}}(\tau)\varepsilon+\mathcal{O}(\varepsilon^{2})\,,\label{eq:u_approx}\\
    v(\tau)&=v_{\mathrm{He}}(\tau)+v_{\mathrm{Mi}}(\tau)\varepsilon+\mathcal{O}(\varepsilon^{2})\,,\label{eq:v_approx}
\end{align}
where the outer solutions $u_{\mathrm{He}}$ and $v_{\mathrm{He}}$ describe the dynamics in the He-phase and the inner solutions $u_{\mathrm{Mi}}$ and $v_{\mathrm{Mi}}$ describe the dynamics in the Mi-phase. By the analyticity of the conversion function $f$, we have
\begin{equation}
    f\left(v\right)=f\left(v_{\mathrm{He}}+v_{\mathrm{Mi}}\varepsilon+\mathcal{O}(\varepsilon^{2})\right)=f\left(v_{\mathrm{He}}\right)+\mathcal{O}(\varepsilon)\,.
    \label{eq:f_ana}
\end{equation}
In our previous work, we found the outer solutions dictating the dynamics in the initial He-phase, and these are given by~\cite{borgqvist2025HeMiTo}:
\begin{align}
    u_{\mathrm{He}}(\tau)&=\frac{c_{1}}{c_{2}}-\left(\frac{c_{1}}{c_{2}}-u_{0}\right)e^{-c_{2}\tau}\,,\label{eq:u_He}\\
    v_{\mathrm{He}}(\tau)&=v_{0}\,.\label{eq:v_He}
\end{align}
Next, we find the corresponding inner solutions in order to analytically characterise the dynamics in the subsequent Mi-phase.

\section{Results}\label{sec:res}

Collectively, our results provide a complete analytical characterisation of the Mi- and To-phases of the HeMiTo dynamics framework. We show that the healthy species exhibits a concave-like trajectory, attaining a well-defined maximum, while the toxic species transitions from its initial, constant concentration in the He-phase into exponential growth during the Mi-phase and subsequently into saturation in the To-phase.

To establish these results, we first derive the inner solutions that govern the Mi-phase dynamics. These are then matched with the outer solutions from the He-phase to obtain an analytical expression for the maximum concentration of the healthy species. We further identify precise conditions under which the toxic species exhibits exponential growth in the Mi-phase, yielding a mechanistically interpretable growth rate. Finally, we show that, when the system is initialised near the TSS in the To-phase, the dynamics reduce to a logistic growth equation, thereby linking exponential growth in the Mi-phase to saturation in the To-phase.

\subsection{Mi-phase dynamics are governed by inner solutions}
The inner solutions governing the dynamics of the Mi-phase are obtained by solving simpler ODEs that originate from the $\mathcal{O}(\varepsilon)$ terms in the perturbation ans\"{a}te. Specifically, the perturbation ans\"{a}tze in Eqs. \eqref{eq:u_approx} to \eqref{eq:f_ana} are substituted into the class of heterodimer models in Eqs. \eqref{eq:ODE_u} to \eqref{eq:IC_u_and_v} and then the $\mathcal{O}(\varepsilon)$ terms are extracted from the resulting equations. To match our inner solutions with the outer solutions obtained previously in Eqs. \eqref{eq:u_He} and \eqref{eq:v_He}, we impose the following initial conditions on the inner solutions:
\begin{align}
    u_{\mathrm{Mi}}(\tau=0)&=\lim_{\tau\to\infty}u_{\mathrm{He}}(\tau)=\dfrac{c_{1}}{c_{2}}\,,\label{eq:u_Mi_IC}\\
    v_{\mathrm{Mi}}(\tau=0)&=\lim_{\tau\to\infty}v_{\mathrm{He}}(\tau)=v_{0}\,.\label{eq:v_Mi_IC}
\end{align}
Using these initial conditions, we find analytical equations for the inner solutions characterising the Mi-phase dynamics (Theorem \ref{thm:inner_sol}).

\begin{theorem}[Inner solutions $u_{\mathrm{Mi}}(\tau)$ and $v_{\mathrm{Mi}}(\tau)$]
Consider the perturbation ans\"{a}tze for the healthy species in Eq. \eqref{eq:u_approx} and the toxic species in Eq. \eqref{eq:v_approx} for the solutions of the class of heterodimer models in Eqs. \eqref{eq:ODE_u} to \eqref{eq:IC_u_and_v}. Furthermore, assume that the initial conditions in Eq. \eqref{eq:IC_u_and_v} satisfy $u_{0},v_{0}=\mathcal{O}(1)$ and that the conversion function $f$ is analytical and satisfies $f(v_{0})=\mathcal{O}(1)$. Then the inner solutions are given by
\begin{align}
u_{\mathrm{Mi}}(\tau)
&=
\left(\frac{c_1}{c_2}+\frac{c_1v_0f(v_0)}{c_2^2}\right)e^{-c_2\tau}
+v_0f(v_0)\left(\frac{c_1}{c_2}-u_0\right)\tau e^{-c_2\tau}
-\frac{c_1v_0f(v_0)}{c_2^2},\label{eq:u_MI}
\\[0.5em]
v_{\mathrm{Mi}}(\tau)
&=
v_0-\frac{v_0f(v_0)}{c_2}\left(\frac{c_1}{c_2}-u_0\right)
+v_0\left(f(v_0)\frac{c_1}{c_2}-1\right)\tau
+\frac{v_0f(v_0)}{c_2}\left(\frac{c_1}{c_2}-u_0\right)e^{-c_2\tau}.\label{eq:v_Mi}
\end{align}
which satisfy the initial conditions in Eqs. \eqref{eq:u_Mi_IC} and \eqref{eq:v_Mi_IC}, respectively.
    \label{thm:inner_sol}
\end{theorem}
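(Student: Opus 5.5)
The plan is to substitute the ans\"{a}tze in Eqs. \eqref{eq:u_approx} and \eqref{eq:v_approx}, together with the expansion of the conversion function in Eq. \eqref{eq:f_ana}, into the class of heterodimer models in Eqs. \eqref{eq:ODE_u} and \eqref{eq:ODE_v}. We then collect powers of $\varepsilon$. Every occurrence of $f$ in Eqs. \eqref{eq:ODE_u} and \eqref{eq:ODE_v} is multiplied by $\varepsilon$. Hence only the leading term $f(v_{\mathrm{He}})=f(v_0)$ of Eq. \eqref{eq:f_ana} enters at $\mathcal{O}(\varepsilon)$, and the $\mathcal{O}(\varepsilon)$ correction to $f$ can be discarded. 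This is where the assumptions that $f$ is analytic and $f(v_0)=\mathcal{O}(1)$ are used.

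At $\mathcal{O}(1)$ we recover $u_{\mathrm{He}}'=c_1-c_2u_{\mathrm{He}}$ and $v_{\mathrm{He}}'=0$, consistent with Eqs. \eqref{eq:u_He} and \eqref{eq:v_He}. At $\mathcal{O}(\varepsilon)$ we obtain the linear system
\begin{align*}
\frac{\mathrm{d}u_{\mathrm{Mi}}}{\mathrm{d}\tau}&=-c_2u_{\mathrm{Mi}}-v_0f(v_0)\,u_{\mathrm{He}}(\tau)\,,\\
\frac{\mathrm{d}v_{\mathrm{Mi}}}{\mathrm{d}\tau}&=v_0\left(f(v_0)\,u_{\mathrm{He}}(\tau)-1\right)\,.
\end{align*}
Here $u_{\mathrm{He}}$ is the known function in Eq. \eqref{eq:u_He}, so both equations are explicit. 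The initial data are Eqs. \eqref{eq:u_Mi_IC} and \eqref{eq:v_Mi_IC}.

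The $v_{\mathrm{Mi}}$ equation is decoupled from $u_{\mathrm{Mi}}$ and its right-hand side depends only on $\tau$, so we integrate directly. Inserting Eq. \eqref{eq:u_He} gives
\begin{equation*}
v_{\mathrm{Mi}}'=v_0\left(f(v_0)\tfrac{c_1}{c_2}-1\right)-v_0f(v_0)\left(\tfrac{c_1}{c_2}-u_0\right)e^{-c_2\tau}\,.
\end{equation*}
Integrating from $0$ to $\tau$ and adding $v_0$ yields a linear term in $\tau$ plus the term $\frac{v_0f(v_0)}{c_2}\left(\frac{c_1}{c_2}-u_0\right)\left(e^{-c_2\tau}-1\right)$. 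Rearranging gives exactly Eq. \eqref{eq:v_Mi}.

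For $u_{\mathrm{Mi}}$ we use the integrating factor $e^{c_2\tau}$ on
\begin{equation*}
u_{\mathrm{Mi}}'+c_2u_{\mathrm{Mi}}=-\frac{c_1v_0f(v_0)}{c_2}+v_0f(v_0)\left(\frac{c_1}{c_2}-u_0\right)e^{-c_2\tau}\,.
\end{equation*}
The constant forcing produces the particular solution $-c_1v_0f(v_0)/c_2^2$. The exponential forcing is resonant with the homogeneous solution $e^{-c_2\tau}$, so it produces the secular term $v_0f(v_0)\left(\frac{c_1}{c_2}-u_0\right)\tau e^{-c_2\tau}$. The remaining homogeneous coefficient $C e^{-c_2\tau}$ is fixed by $u_{\mathrm{Mi}}(0)=c_1/c_2$, which gives $C=\frac{c_1}{c_2}+\frac{c_1v_0f(v_0)}{c_2^2}$. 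This reproduces Eq. \eqref{eq:u_MI}. A final check by differentiation and by evaluating at $\tau=0$ confirms both the ODEs and the initial conditions.

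The calculations themselves are routine. The main points needing care are, first, the order bookkeeping, namely justifying that $f(v)$ may be replaced by $f(v_0)$ at this order. Second, correctly handling the resonant forcing term in the $u_{\mathrm{Mi}}$ equation, which is the origin of the $\tau e^{-c_2\tau}$ term.
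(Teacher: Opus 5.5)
Your proposal is correct and follows the paper's own route. You extract the $\mathcal{O}(\varepsilon)$ system $u_{\mathrm{Mi}}'=-c_2u_{\mathrm{Mi}}-v_0f(v_0)u_{\mathrm{He}}$, $v_{\mathrm{Mi}}'=v_0\left(f(v_0)u_{\mathrm{He}}-1\right)$, integrate the second equation directly and the first with the integrating factor $e^{c_2\tau}$, and fix the constants from Eqs. \eqref{eq:u_Mi_IC} and \eqref{eq:v_Mi_IC}, exactly as the paper does; your explicit justification for replacing $f(v)$ by $f(v_0)$ at this order and your remark on the resonant origin of the $\tau e^{-c_2\tau}$ term are useful details the paper leaves implicit.
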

\begin{proof}
    We substitute the outer solution $u_{\mathrm{He}}$ in Eq. \eqref{eq:u_He} into the perturbation ansatz for $u$ in Eq. \eqref{eq:u_approx} and the outer solution $v_{\mathrm{He}}$ in Eq. \eqref{eq:v_He} into the perturbation ansatz for $v$ in Eq. \eqref{eq:v_approx}. Thereafter, the resulting two ans\"{a}tze together with the expansion for the conversion function $f$ in Eq. \eqref{eq:f_ana} are substituted into the original ODEs in Eqs. \eqref{eq:ODE_u} and \eqref{eq:ODE_v}. In the resulting equations, we extract the $\mathcal{O}(\varepsilon)$ terms which yields the following decoupled system of ODEs for the inner solutions:
    \begin{align}
    \dfrac{\mathrm{d}u_{\mathrm{Mi}}}{\mathrm{d}\tau}&=-c_{2}u_{\mathrm{Mi}}-v_{0}f(v_{0})\left(\dfrac{c_{1}}{c_{2}}-\left(\dfrac{c_{1}}{c_{2}}-u_{0}\right)\exp(-c_{2}\tau)\right)\,,\label{eq:u_Mi_ODE}\\
    \dfrac{\mathrm{d}v_{\mathrm{Mi}}}{\mathrm{d}\tau}&=v_{0}\left(f(v_{0})\left(\dfrac{c_{1}}{c_{2}}-\left(\dfrac{c_{1}}{c_{2}}-u_{0}\right)\exp(-c_{2}\tau)\right)-1\right)\,.\label{eq:v_Mi_ODE}
\end{align}
In reverse order, the ODE in Eq. \eqref{eq:v_Mi_ODE} is separable and is solved directly by integration and the ODE in Eq. \eqref{eq:u_Mi_ODE} is solved by means of an integrating factor. The inner solutions are therefore given by:
\begin{align}
u_{\mathrm{Mi}}(\tau)
&=
K_{1}e^{-c_{2}\tau}
+v_{0}f(v_{0})\left(\frac{c_{1}}{c_{2}}-u_{0}\right)\tau e^{-c_{2}\tau}
-\frac{c_{1}v_{0}f(v_{0})}{c_{2}^{2}}\,,
\label{eq:u_Mi_first}\\[0.5em]
v_{\mathrm{Mi}}(\tau)
&=
K_{2}
+v_{0}\left(f(v_{0})\frac{c_{1}}{c_{2}}-1\right)\tau
+\frac{v_{0}f(v_{0})}{c_{2}}\left(\frac{c_{1}}{c_{2}}-u_{0}\right)e^{-c_{2}\tau}\,,\label{eq:v_Mi_first}
\end{align}
where $K_{1}$ and $K_{2}$ are two arbitrary integration constants. The value of these integration constants that satisfy the initial conditions in Eqs. \eqref{eq:u_Mi_IC} and \eqref{eq:v_Mi_IC} are given by
\begin{align}
K_1 &= \frac{c_1}{c_2}+\frac{c_1v_0f(v_0)}{c_2^2}\,,
\\
K_2 &= v_0-\frac{v_0f(v_0)}{c_2}\left(\frac{c_1}{c_2}-u_0\right)\,,
\end{align}
which completes the proof.
\end{proof}
Using these inner solutions, we begin by investigating the apparent concavity, discovered in our previous work~\cite{borgqvist2025HeMiTo}, of the healthy species during the He- and Mi-phases.

\subsection{Asymptotic matching in the Mi-phase yields the healthy steady state maximum}
The asymptotic matching of the inner and outer solutions for the healthy species (Fig. \ref{fig:matching}) shows that the maximum of the healthy species is, in fact, given by its HSS value $c_{1}/c_{2}$ according to Eq. \eqref{eq:HSS}. Provided a biologically realistic initial condition for the healthy species $u_{0}\in(0,c_{1}/c_{2})$, the derivative of the outer solution is positive:
\begin{equation}
    \dfrac{\mathrm{d}u_{\mathrm{He}}}{\mathrm{d}\tau}=\left(\frac{c_{1}}{c_{2}}-u_{0}\right)e^{-c_{2}\tau}>0\,.
\label{eq:u_He_increasing}
\end{equation}
Hence, $u_{\mathrm{He}}$ is an increasing function where its maximum value is given by its HSS-value according to $\max_{\tau\in[0,\infty)}u_{\mathrm{He}}(\tau)=\lim_{\tau\to\infty}u_{\mathrm{He}}(\tau)=c_{1}/c_{2}$. 
\begin{figure}[ht!]
    \centering
    \includegraphics[width=\linewidth]{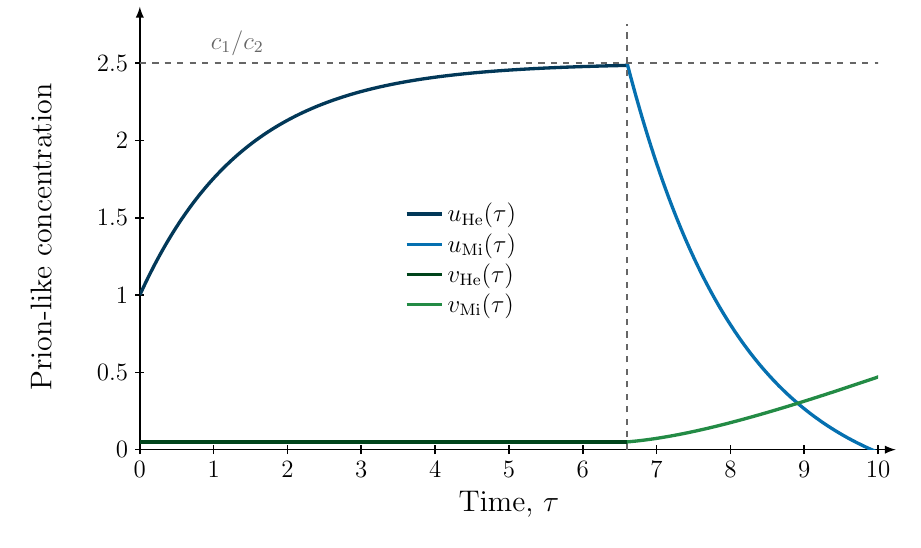}
    \caption{\textit{Asymptotic matching of the inner and outer solutions}. The inner and outer solutions for the healthy species denoted by $u_{\mathrm{He}}$ and $u_{\mathrm{Mi}}$ are illustrated in blue curves and bounded by the maximum value $c_{1}/c_{2}$, while the inner and outer solutions for the toxic species denoted by $v_{\mathrm{He}}$ and $v_{\mathrm{Mi}}$ are illustrated in the green curves. The parameters and initial conditions for the illustrated curves are $c_{1}=1.75$, $c_{2}=0.70$, $u_{0}=1.00$, $v_{0}=0.05$ and $f(v_{0})=1.80$. }
    \label{fig:matching}
\end{figure}

Using a similar line of reasoning, the derivative of the inner solution for the heathy species $u_{\mathrm{Mi}}$ is given by
\begin{align}
\dfrac{\mathrm{d}u_{\mathrm{Mi}}}{\mathrm{d}\tau}&=-e^{-c_2\tau}\left((c_1+v_0f(v_0)u_0)+c_2v_0f(v_0)\left(\frac{c_1}{c_2}-u_0\right)\tau\right)\,,
\end{align}
which means that it is decreasing, i.e. $\mathrm{d}u_{\mathrm{Mi}}/\mathrm{d}\tau<0$. Its maximum value is therefore given by its initial condition $\max_{\tau\in[0,\infty)}u_{\mathrm{Mi}}(\tau)=u_{\mathrm{Mi}}(\tau=0)=c_{1}/c_{2}$, which again corresponds to the HSS-value of the healthy species. Thus, the maximum value of the healthy species is given by its HSS-value which corresponds to the point where the inner and outer solutions are matched (Fig. \ref{fig:matching}). 

In total, the matching of the inner and outer solutions for the healthy species explains the apparent concavity of the trajectory for the healthy species observed in our previous work~\cite{borgqvist2025HeMiTo} obtained by numerically solving the original class of heterodimer models. Dynamically speaking, the outer solution approaches its HSS-value during the He-phase due to the fact that the constant formation rate $c_{1}$ drives the dynamics, and then when the Mi-phase starts the degradation and conversion rates drive the dynamics instead, which causes a decrease in the concentration of the healthy species. Consequently, the function value for which the outer and inner solutions for the healthy species are matched, which is the HSS value $c_{1}/c_{2}$, corresponds to the maximum value. Another numerically motivated feature observed in our previous work~\cite{borgqvist2025HeMiTo} is the exponential growth of the toxic species, which we analyse analytically using our matched asymptotics next.

\subsection{Mi-phase dynamics yield exponential growth of the toxic species}
Our main result is that the toxic species grows exponentially during the Mi-phase, corresponding to the onset of disease, \textit{under a specific condition}. This condition is that the initial concentration of the healthy species, $u_{0}$, must be ``sufficiently high'', and we can quantify using our analytical inner solution for the toxic species what ``sufficiently high'' means exactly. Thus, given a sufficiently high initial concentration $u_{0}$, the toxic species $v_{\mathrm{Mi}}(\tau)$ grows exponentially during the onset of disease (Theorem \ref{thm:exp}).

\begin{theorem}[Exponential approximation of the toxic species]
Let
\begin{equation}
v_{\mathrm{Mi}}(\tau)
=
v_0
+v_0\left(f(v_0)\frac{c_1}{c_2}-1\right)\tau
+\frac{v_0f(v_0)}{c_2}
\left(\frac{c_1}{c_2}-u_0\right)\left(e^{-c_2\tau}-1\right),
\label{eq:v_Mi_thm_2}
\end{equation}
where $v_0$, $f(v_0)$, $c_1$, $c_2$, and $u_0$ are positive constants. Define
\begin{equation}
\lambda:=f(v_0)u_0-1.
\label{eq:lambda}
\end{equation}
Then, as $\tau\to 0$,
\begin{equation}
v_{\mathrm{Mi}}(\tau)=v_0 e^{\lambda\tau}+\mathcal{O}(\tau^2)\,.
\end{equation}
Hence, for sufficiently small times, $v_{\mathrm{Mi}}$ is well-approximated by the exponential function
\begin{equation}
v_0 e^{(f(v_0)u_0-1)\tau}.
\label{eq:exp_approx}
\end{equation}
\label{thm:exp}
\end{theorem}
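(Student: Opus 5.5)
The plan is to Taylor expand both sides of the claimed identity to first order in $\tau$ around $\tau=0$ and check that the constant and linear coefficients coincide. This reduces the statement to comparing $v_{\mathrm{Mi}}(0)$ and $v_{\mathrm{Mi}}'(0)$ with the corresponding quantities for $v_0e^{\lambda\tau}$.

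\textbf{Smoothness.} The function $v_{\mathrm{Mi}}$ in Eq.~\eqref{eq:v_Mi_thm_2} is a sum of a polynomial of degree one in $\tau$ and a multiple of $e^{-c_2\tau}-1$. It is therefore smooth (indeed entire) in $\tau$, and Taylor's theorem with remainder gives
\begin{equation*}
v_{\mathrm{Mi}}(\tau)=v_{\mathrm{Mi}}(0)+v_{\mathrm{Mi}}'(0)\tau+\mathcal{O}(\tau^2)\quad\text{as }\tau\to0.
\end{equation*}
The same expansion holds for $v_0e^{\lambda\tau}=v_0+v_0\lambda\tau+\mathcal{O}(\tau^2)$, since the exponential is entire.

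\textbf{Constant term.} At $\tau=0$ the factor $e^{-c_2\tau}-1$ vanishes and the linear term vanishes, so $v_{\mathrm{Mi}}(0)=v_0$.

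\textbf{Linear term.} Differentiating Eq.~\eqref{eq:v_Mi_thm_2} gives
\begin{equation*}
v_{\mathrm{Mi}}'(\tau)=v_0\left(f(v_0)\frac{c_1}{c_2}-1\right)-v_0f(v_0)\left(\frac{c_1}{c_2}-u_0\right)e^{-c_2\tau}.
\end{equation*}
This agrees with the right-hand side of Eq.~\eqref{eq:v_Mi_ODE}, which serves as a consistency check. Evaluating at $\tau=0$, the $c_1/c_2$ contributions cancel and leave
\begin{equation*}
v_{\mathrm{Mi}}'(0)=v_0\left(f(v_0)u_0-1\right)=v_0\lambda.
\end{equation*}

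\textbf{Conclusion.} The two expansions share their zeroth- and first-order coefficients, so subtracting them gives $v_{\mathrm{Mi}}(\tau)-v_0e^{\lambda\tau}=\mathcal{O}(\tau^2)$. This is the claim, and the exponential approximation Eq.~\eqref{eq:exp_approx} follows for small $\tau$.

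For completeness, the second-order coefficient of the difference is $\tfrac12\left(v_{\mathrm{Mi}}''(0)-v_0\lambda^2\right)$, with $v_{\mathrm{Mi}}''(0)=c_2v_0f(v_0)\left(c_1/c_2-u_0\right)$. This shows that the $\mathcal{O}(\tau^2)$ is generally sharp. Its implied constant is bounded on any fixed interval $[0,T]$ by the second-derivative bounds from Taylor's theorem.

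There is no real obstacle. The only care needed is the cancellation of the $c_1/c_2$ terms at $\tau=0$, which is what turns the coefficient $f(v_0)c_1/c_2-1$ appearing in Eq.~\eqref{eq:v_Mi_thm_2} into the growth rate $\lambda=f(v_0)u_0-1$.
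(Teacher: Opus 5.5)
Your proof is correct and takes essentially the same route as the paper: both match the zeroth- and first-order Taylor coefficients of $v_{\mathrm{Mi}}$ and $v_0e^{\lambda\tau}$ at $\tau=0$. The only difference is that the paper substitutes the series for $e^{-c_2\tau}$ directly into the formula instead of differentiating, and your second-order discrepancy $\tfrac{v_0}{2}\left(f(v_0)(c_1-c_2u_0)-\lambda^2\right)$ agrees with the paper's.
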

\begin{proof}
Using the Taylor expansion of the exponential function,
\begin{equation}
e^{-c_2\tau}
=1-c_2\tau+\frac{c_2^2\tau^2}{2}-\frac{c_2^3\tau^3}{6}+\mathcal{O}(\tau^4),
\qquad \tau\to 0,
\end{equation}
in Eq. \eqref{eq:v_Mi_thm_2} for $v_{\mathrm{Mi}}$, we obtain
\begin{align}
v_{\mathrm{Mi}}(\tau)
&=
v_0+v_0\left(f(v_0)\frac{c_1}{c_2}-1\right)\tau
+\frac{v_0f(v_0)}{c_2}
\left(\frac{c_1}{c_2}-u_0\right)
\left(
-c_2\tau+\frac{c_2^2\tau^2}{2}
\right)+\mathcal{O}(\tau^3)
\nonumber\\
&=v_0
+v_0\bigl(\underset{=\lambda}{\underbrace{f(v_0)u_0-1}}\bigr)\tau
+\frac{v_0f(v_0)(c_1-c_2u_0)}{2}\tau^2
+\mathcal{O}(\tau^3)\nonumber\,,
\end{align}
which yields that 
\begin{equation}
    v_{\mathrm{Mi}}(\tau)=v_0+v_0\lambda\tau
+\frac{v_0f(v_0)(c_1-c_2u_0)}{2}\tau^2+O(\tau^3)\,.
\label{eq:expansion_1}
\end{equation}
On the other hand,
\begin{equation}
v_0e^{\lambda\tau}=v_0
+v_0\lambda\tau
+\frac{v_0\lambda^2}{2}\tau^2
+\mathcal{O}(\tau^3),\qquad \tau\to 0.
\label{eq:expansion_2}
\end{equation}
Subtracting the expansions in Eqs. \eqref{eq:expansion_1} and \eqref{eq:expansion_2} yields
\begin{equation}
v_{\mathrm{Mi}}(\tau)-v_0e^{\lambda\tau}
=\frac{v_0}{2}
\left(f(v_0)(c_1-c_2u_0)-\lambda^2
\right)\tau^2
+\mathcal{O}(\tau^3)=\mathcal{O}(\tau^{2}),
\end{equation}
which proves the claim.
\end{proof}

\begin{figure}[ht!]
\centering
\includegraphics[width=\linewidth]{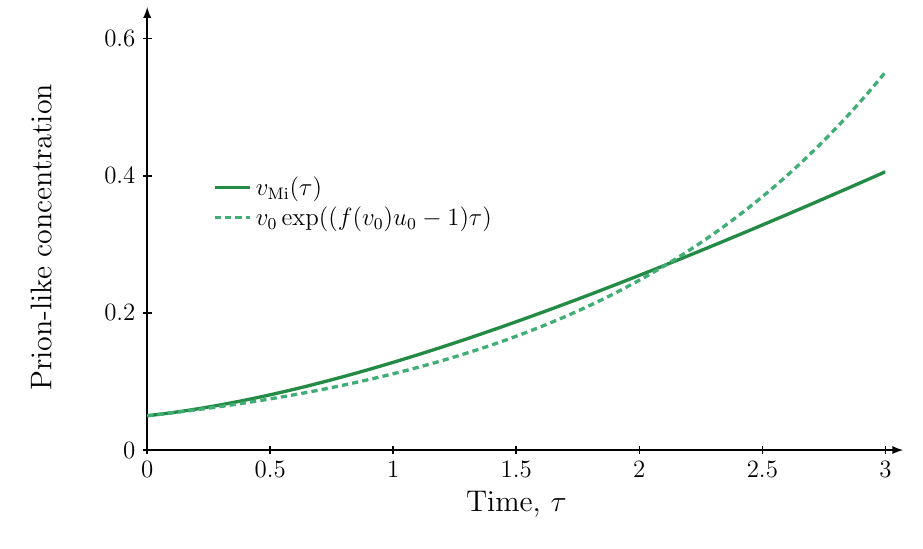}
\caption{\textit{Exponential approximation of the toxic species during the Mi-phase}. The inner solution $v_{\mathrm{Mi}}(\tau)$ in Eq. \eqref{eq:v_Mi_thm_2}, the solid line, is compared to the exponential approximation $v_{0}\exp\left((u_{0}f(v_{0})-1)\tau\right)$ in Eq. \eqref{eq:exp_approx}, the dashed line. The exponential approximation is accurate for early time points when $\tau\in[0,0.5]$. The parameters and initial conditions for the illustrated curves are $c_{1}=1.75$, $c_{2}=0.70$, $u_{0}=1.00$, $v_{0}=0.05$ and $f(v_{0})=1.80$. }
    \label{fig:exp_approx}
\end{figure}

To guarantee exponential growth of the toxic species during the Mi-phase, we require that the growth rate $\lambda$ in Eq. \eqref{eq:lambda} is positive. The positivity condition $\lambda>0$ is equivalent to the following inequality
\begin{equation}
    u_{0}>\dfrac{1}{f(v_{0})}\,,
    \label{eq:u0_large}
\end{equation}
and thus the toxic species grows exponentially during the Mi-phase when the initial concentration $u_{0}$ satisfies this lower bound. When this lower bound is satisfied, the growth rate of the toxic species satisfies $\lambda\propto u_{0}$ and $\lambda\propto f(v_{0})$. These two proportionalities imply that the growth rate of the toxic species during the onset of disease is directly determined by both the initial concentration of the healthy species and the conversion function $f$ encoding the mechanistic features of the prion-like protein of interest.

The exponential growth derived above characterises the Mi-phase, where the toxic species increases rapidly from low concentrations. We now turn to the subsequent To-phase, where the system evolves in a neighbourhood of the toxic steady state (TSS) in Eq. \eqref{eq:TSS}. In this regime, the dynamics are no longer governed by exponential growth but instead by saturation. To capture this saturation, we revisit physically motivated arguments and formalise them within our framework, showing that the dynamics reduce to a logistic growth law.

\subsection{To-phase dynamics through quasi steady-state reduction yield logistic growth}
We now turn to the final phase of the dynamics, namely the toxic (To) phase, where the system evolves in a neighbourhood of the toxic steady state (TSS). Previous work has shown that, in this regime, the dynamics should be well approximated by the linearised system around the TSS~\cite{borgqvist2025HeMiTo,gerlee2022weak}. In parallel, Fornari et al.~\cite{fornari2019prion} proposed a physically motivated argument, suggesting that when the healthy species dominates, i.e.~$u \gg v$, the dynamics reduce to a logistic growth equation for the toxic species. Our aim here is to formalise this physically motivated argument within our class of heterodimer models and to derive the logistic growth law rigorously from the governing equations.

We begin by clarifying what is meant by a quasi steady-state of the healthy species near the TSS.

\begin{lemma}[Quasi steady-state near the toxic steady state]
Consider the heterodimer model in Eqs.~(2.8) and (2.9). Suppose that the toxic steady state 
$\mathrm{TSS} = (u_{2}^{\star},v_{2}^{\star})$ in Eq.~\eqref{eq:TSS} exists, that it is stable in the sense of Eq. \eqref{eq:f_bound_by_g} and assume that the solution is initialised at
\begin{align}
u_0 = u_2^\star.
\end{align}
Then, for all $\tau \geq 0$, the evolution of the healthy species satisfies
\begin{equation}
\frac{\mathrm{d}u}{\mathrm{d}\tau}
=\varepsilon\left(
v_2^\star - \frac{v f(v)}{f(v_2^\star)}
\right).
\label{eq:u_TSS}
\end{equation}
In particular, if $v$ lies in a neighbourhood of $v_2^\star$ such that
\begin{align}
v_2^\star - \frac{v f(v)}{f(v_2^\star)} = \mathcal{O}(1),
\end{align}
then
\begin{align}
\frac{\mathrm{d}u}{\mathrm{d}\tau} = \mathcal{O}(\varepsilon),
\label{eq:u_time_scale}
\end{align}
and the healthy species is quasi-stationary at early times.
\label{lemma:quasi}
\end{lemma}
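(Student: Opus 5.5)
The plan is to compute the right-hand side of Eq.~\eqref{eq:ODE_u} directly, using the defining relations of the TSS in Eqs.~\eqref{eq:v_2_star} and \eqref{eq:TSS} to simplify it. Since the TSS exists, we have $u_2^\star = 1/f(v_2^\star)$ with $f(v_2^\star)>0$. Rearranging Eq.~\eqref{eq:v_2_star} gives
\begin{equation*}
\varepsilon v_2^\star = c_1 - \frac{c_2}{f(v_2^\star)} = c_1 - c_2 u_2^\star .
\end{equation*}
This identity is the key algebraic fact. It says that the linear part $c_1 - c_2 u$ of the healthy dynamics, evaluated at $u_2^\star$, is itself of order $\varepsilon$.

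I would then substitute $u = u_2^\star$ into Eq.~\eqref{eq:ODE_u}:
\begin{equation*}
\frac{\mathrm{d}u}{\mathrm{d}\tau} = c_1 - c_2 u_2^\star - \varepsilon u_2^\star v f(v) = \varepsilon v_2^\star - \varepsilon \frac{v f(v)}{f(v_2^\star)} ,
\end{equation*}
which is exactly Eq.~\eqref{eq:u_TSS}. At $\tau = 0$ this is an exact identity, because $u(0)=u_0=u_2^\star$.

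For $\tau>0$, the claim ``for all $\tau\ge 0$'' has to be read as a quasi steady-state statement. The healthy species is held at its TSS value $u_2^\star$, and $v$ evolves freely. I expect the main obstacle to be making this interpretation honest. The true solution $u(\tau)$ does not stay at $u_2^\star$ unless $v\equiv v_2^\star$. I would argue self-consistently: if $u$ starts at $u_2^\star$ and its rate of change is $\mathcal{O}(\varepsilon)$, then on $\mathcal{O}(1)$ time scales $u(\tau)=u_2^\star+\mathcal{O}(\varepsilon\tau)$. Substituting this back into $c_1-c_2u-\varepsilon u v f(v)$ changes the right-hand side only by $\mathcal{O}(\varepsilon)$ terms. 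The order of the rate is therefore preserved, and Eq.~\eqref{eq:u_TSS} holds up to higher-order corrections. The stability conditions in Eq.~\eqref{eq:f_bound_by_g} would be used only to guarantee that $v$ stays near $v_2^\star$, so that this neighbourhood assumption is consistent along the trajectory.

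Finally, the ``in particular'' part is immediate. If $v$ lies in a neighbourhood of $v_2^\star$ where $v_2^\star - v f(v)/f(v_2^\star)=\mathcal{O}(1)$, the prefactor $\varepsilon$ gives $\mathrm{d}u/\mathrm{d}\tau=\mathcal{O}(\varepsilon)$, as in Eq.~\eqref{eq:u_time_scale}. By continuity of $f$ (it is analytic), this bracket is bounded on any compact neighbourhood of $v_2^\star$, and it vanishes at $v=v_2^\star$. Hence $u$ varies on the slow time scale $\tau\sim 1/\varepsilon$, while $v$ adjusts on its own time scale, so the healthy species is quasi-stationary at early times.
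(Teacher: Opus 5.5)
Your argument is the paper's proof: substitute $u=u_2^\star=1/f(v_2^\star)$ into Eq.~\eqref{eq:ODE_u}, use $c_1-c_2/f(v_2^\star)=\varepsilon v_2^\star$ from Eq.~\eqref{eq:v_2_star}, and then read off $\mathrm{d}u/\mathrm{d}\tau=\mathcal{O}(\varepsilon)$ from the assumed $\mathcal{O}(1)$ bracket. Your caveat that the identity is exact only while $u=u_2^\star$ is fair, since the paper glosses over this; but for $\tau>0$ the discrepancy $-(c_2+\varepsilon vf(v))(u-u_2^\star)$ is $\mathcal{O}(\varepsilon\tau)$ (because $\varepsilon v\approx\varepsilon v_2^\star=\mathcal{O}(1)$), which is the same order as the retained term once $\tau=\mathcal{O}(1)$, so beyond early times only the estimate $\mathrm{d}u/\mathrm{d}\tau=\mathcal{O}(\varepsilon)$ survives, not Eq.~\eqref{eq:u_TSS} ``up to higher-order corrections''.
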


\begin{proof}
From the governing equation for the healthy species, we have
\begin{align}
\frac{\mathrm{d}u}{\mathrm{d}\tau}
=
c_1 - c_2 u - \varepsilon u v f(v).
\end{align}
Using the assumption $u_0 = u_2^\star$ together with the expression for the toxic steady state,
\begin{align}
u_2^\star = \frac{1}{f(v_2^\star)},
\end{align}
we obtain
\begin{align}
\frac{\mathrm{d}u}{\mathrm{d}\tau}
=
c_1 - \frac{c_2}{f(v_2^\star)} - \varepsilon \frac{v f(v)}{f(v_2^\star)}.
\end{align}
By definition of the toxic steady state $v_{2}^{\star}$, we have
\begin{align}
c_1 - \frac{c_2}{f(v_2^\star)} = \varepsilon v_2^\star,
\end{align}
which yields
\begin{align}
\frac{\mathrm{d}u}{\mathrm{d}\tau}
=
\varepsilon\left(
v_2^\star - \frac{v f(v)}{f(v_2^\star)}
\right).
\end{align}
The stated estimate follows immediately.
\end{proof}
In particular, when $v = v_2^\star$ in Eq. \eqref{eq:u_TSS}, we have $\mathrm{d}u/\mathrm{d}\tau = 0$, so that the healthy species is exactly stationary at the TSS. Thus, the quasi steady-state approximation is exact at the TSS and remains accurate in its neighbourhood.

Lemma~\ref{lemma:quasi} shows that, when the system is initialised in a neighbourhood of the TSS, the healthy species evolves on a slow time scale. More precisely, the rate of change of the healthy species is proportional to the perturbation parameter $\varepsilon$ according to Eq.~\eqref{eq:u_time_scale}. Since $\varepsilon < 1$, the evolution of $u$ is slow compared to that of $v$. The healthy species is therefore naturally approximated by a quasi steady-state.

Under this quasi steady-state approximation, we set $\mathrm{d}u/\mathrm{d}\tau \approx 0$ in the first equation, which yields the algebraic relation
\begin{align}
u = \frac{c_1}{c_2 + \varepsilon v f(v)}\,,
\end{align}
for the healthy species. Substituting this expression into the equation for the toxic species gives the scalar equation
\begin{equation}
\frac{\mathrm{d}v}{\mathrm{d}\tau}
=\varepsilon v \left( \frac{c_1 f(v)}{c_2 + \varepsilon v f(v)} - 1 \right).
\label{eq:scalar_v}
\end{equation}
We are now in a position to derive the logistic growth law (Theorem \ref{thm:logistic}).

\begin{theorem}[Local logistic growth of the toxic species near the TSS]
Consider the class of heterodimer models in Eqs.~(1) and (2) subject to positive initial
conditions in Eq.~(3), and assume that the conversion function $f$ is analytic. Suppose that
the toxic steady state $\mathrm{TSS} = (u_{2}^{\star},v_{2}^{\star})$ in Eq. \eqref{eq:TSS} exists and that it is stable in the sense of Eq. \eqref{eq:f_bound_by_g} . Assume furthermore that the system is initialised in a quasi steady-state in the sense of Lemma~\ref{lemma:quasi}, i.e.~$u_0 = u_2^\star$, with $v_0$ lying in a neighbourhood of $v_2^\star$. Then, as $\tau\to 0$, the toxic species
is approximated by a logistic growth function
\begin{equation}
v(\tau)=
\frac{\kappa}{1+\left(\frac{\kappa-v_{0}}{v_{0}}\right)e^{-\gamma\tau}}
+\mathcal{O}(\tau^{3}),
\label{eq:logistic}
\end{equation}
where the growth rate $\gamma$ and the carrying capacity $\kappa$ are given by
\begin{align}
\gamma
&=\varepsilon\left(\Psi(v_{0})-v_{0}\Psi'(v_{0})\right),\label{eq:growth_rate}
\\[0.5em]
\kappa
&=
v_{0}-\frac{\Psi(v_{0})}{\Psi'(v_{0})}\label{eq:carrying_capacity},
\end{align}
provided $\Psi'(v_{0})\neq 0$, and where
\begin{align}
\Psi(v):=\left(\frac{c_{1}f(v)}{c_{2}+\varepsilon v f(v)}-1\right).
\label{eq:psi}
\end{align}
Consequently, the toxic species is locally approximated by a logistic growth function.
\label{thm:logistic}
\end{theorem}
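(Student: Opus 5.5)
The plan is to work with the reduced scalar equation \eqref{eq:scalar_v}, which the quasi steady-state approximation justified by Lemma~\ref{lemma:quasi} gives. Here $v(\tau)$ is understood as the solution of
\begin{equation*}
\frac{\mathrm{d}v}{\mathrm{d}\tau}=\varepsilon v\Psi(v),\qquad v(0)=v_{0},
\end{equation*}
with $\Psi$ as in Eq.~\eqref{eq:psi}.

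Since $f$ is analytic and $c_{2}+\varepsilon v f(v)>0$ near $v_{0}$ (all quantities are positive near the TSS), $\Psi$ is analytic near $v_{0}$. Hence $v(\tau)$ is analytic in a neighbourhood of $\tau=0$ and can be Taylor expanded there.

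The key step is to linearise $\Psi$ about $v_{0}$:
\begin{equation*}
\Psi(v)=\Psi(v_{0})+\Psi'(v_{0})(v-v_{0})+\mathcal{O}\bigl((v-v_{0})^{2}\bigr)=\bigl(\Psi(v_{0})-v_{0}\Psi'(v_{0})\bigr)+\Psi'(v_{0})v+\mathcal{O}\bigl((v-v_{0})^{2}\bigr).
\end{equation*}
Dropping the remainder gives the quadratic (Riccati/logistic) equation
\begin{equation*}
\frac{\mathrm{d}w}{\mathrm{d}\tau}=\varepsilon w\bigl(a+bw\bigr),\qquad a=\Psi(v_{0})-v_{0}\Psi'(v_{0}),\quad b=\Psi'(v_{0}),\qquad w(0)=v_{0}.
\end{equation*}
I then rewrite this equation as $\mathrm{d}w/\mathrm{d}\tau=\gamma w(1-w/\kappa)$, with $\gamma=\varepsilon a$ and $-\gamma/\kappa=\varepsilon b$. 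This yields $\kappa=-a/b=v_{0}-\Psi(v_{0})/\Psi'(v_{0})$, and this step is exactly where $\Psi'(v_{0})\neq0$ is needed. Solving by separation of variables (or via the substitution $1/w$) gives the closed form in Eq.~\eqref{eq:logistic}, with $\gamma$ and $\kappa$ as in Eqs.~\eqref{eq:growth_rate}--\eqref{eq:carrying_capacity}. If $\gamma=0$, the formula degenerates consistently to the constant $w\equiv v_{0}$, which I would remark on separately.

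The remaining task is the error estimate $v(\tau)-w(\tau)=\mathcal{O}(\tau^{3})$. I would show this by comparing Taylor coefficients at $\tau=0$ rather than by a Gronwall argument, since the claim is a local expansion. Both functions satisfy $v(0)=w(0)=v_{0}$. Their first derivatives are $v'(0)=\varepsilon v_{0}\Psi(v_{0})$ and $w'(0)=\varepsilon v_{0}(a+bv_{0})=\varepsilon v_{0}\Psi(v_{0})$, so they agree. Differentiating the ODEs once more gives
\begin{equation*}
v''=\varepsilon\bigl(\Psi(v)+v\Psi'(v)\bigr)v',\qquad w''=\varepsilon(a+2bw)w'.
\end{equation*}
At $\tau=0$ we have $a+2bv_{0}=\Psi(v_{0})+v_{0}\Psi'(v_{0})$, so $v''(0)=w''(0)$ as well. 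Analyticity then gives $v-w=\mathcal{O}(\tau^{3})$. The third derivatives generally differ through a term involving $\Psi''(v_{0})$, so the order $\tau^{3}$ is sharp.

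I expect the main obstacle to be conceptual rather than computational. The error statement is relative to the reduced scalar equation \eqref{eq:scalar_v}, not to the full system \eqref{eq:ODE_u}--\eqref{eq:ODE_v}. I would make this explicit, citing Lemma~\ref{lemma:quasi} to justify the reduction: with $u_{0}=u_{2}^{\star}$ the healthy species drifts only at rate $\mathcal{O}(\varepsilon)$. I would also check that the logistic solution has no blow-up near $\tau=0$, i.e.\ that the denominator $1+\tfrac{\kappa-v_{0}}{v_{0}}e^{-\gamma\tau}$ equals $\kappa/v_{0}\neq0$ at $\tau=0$. This holds since $\kappa\neq0$ whenever $a\neq0$.
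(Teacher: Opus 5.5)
Your proposal is correct and follows the paper's skeleton: reduce via the quasi steady-state to $\mathrm{d}v/\mathrm{d}\tau=\varepsilon v\Psi(v)$, linearise $\Psi$ about $v_0$, and read off $\gamma=\varepsilon a$ and $\kappa=-a/b$ from $\varepsilon w(a+bw)=\gamma w(1-w/\kappa)$. The real difference is the error estimate.

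The paper turns the remainder $\varepsilon v\,\mathcal{O}((v-v_0)^2)$ into an $\mathcal{O}(\tau^2)$ residual in the ODE, using $v-v_0=\mathcal{O}(\tau)$ and $\varepsilon v=\mathcal{O}(1)$ near $v_2^\star$. It then simply asserts that dropping an $\mathcal{O}(\tau^2)$ forcing term perturbs the solution by $\mathcal{O}(\tau^3)$. Made precise, that step is a Gronwall argument: the residual is evaluated along $v$ rather than $w$, so $e=v-w$ satisfies $e'=\mathcal{O}(e)+\mathcal{O}(\tau^2)$ with $e(0)=0$.

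Your matching of $v,v',v''$ against $w,w',w''$ at $\tau=0$ avoids this step entirely and is self-contained for fixed parameters. It also shows the order is sharp, since $v'''(0)-w'''(0)=\varepsilon v_0\Psi''(v_0)\bigl(v'(0)\bigr)^2$. The paper's extra observation $\varepsilon v=\mathcal{O}(1)$ concerns the size of the constants in $\varepsilon$. Your argument does not track this, but the statement, being about $\tau\to0$, does not require it.

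Your insistence that the estimate refers to the reduced equation is justified. For the full system with $u_0=u_2^\star$ one has $v'(0)=\varepsilon v_0\bigl(u_2^\star f(v_0)-1\bigr)$. This differs from $\varepsilon v_0\Psi(v_0)$ unless $v_0f(v_0)=v_2^\star f(v_2^\star)$, so agreement to $\mathcal{O}(\tau^3)$ could not hold against the full system.

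One side remark needs correcting. If $\gamma=0$ then $a=0$, hence $\kappa=-a/b=0$ too, and the closed form becomes $0/0$ rather than the constant $v_0$. The truncated equation is then $w'=\varepsilon b w^2$, whose solution $w=v_0/(1-\varepsilon b v_0\tau)$ is not constant. Your Taylor-matching estimate still holds for this $w$. However, the logistic representation itself needs $\Psi(v_0)\neq v_0\Psi'(v_0)$ in addition to $\Psi'(v_0)\neq0$, a condition the paper's statement also leaves implicit.
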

\begin{proof}
Under the quasi steady-state approximation justified in Lemma~\ref{lemma:quasi}, the healthy
species is given by
\begin{align}
u(v)=\frac{c_{1}}{c_{2}+\varepsilon v f(v)}.
\end{align}
Substituting this expression into the ODE for the toxic species yields the scalar equation
\begin{align}
\frac{\mathrm{d}v}{\mathrm{d}\tau}&=\varepsilon\left(uvf(v)-v\right)\nonumber\\
&=\varepsilon v\left(\frac{c_{1}f(v)}{c_{2}+\varepsilon v f(v)}-1\right)\nonumber\\
&=\varepsilon v\,\Psi(v).\label{eq:scalar_v_proof}
\end{align}
Since the conversion function $f$ is analytic, the function $\Psi$ in Eq.~\eqref{eq:psi}
is also analytic in a neighbourhood of $v_{0}$, and therefore
\begin{align}
\Psi(v)
=
\Psi(v_{0})+\Psi'(v_{0})(v-v_{0})+\mathcal{O}\!\left((v-v_{0})^{2}\right).
\end{align}
Substituting this expansion into the scalar equation in Eq. \eqref{eq:scalar_v_proof} gives
\begin{align}
\frac{\mathrm{d}v}{\mathrm{d}\tau}
&=
\varepsilon v\left[\Psi(v_{0})+\Psi'(v_{0})(v-v_{0})+\mathcal{O}\!\left((v-v_{0})^{2}\right)\right]
\nonumber\\
&=
\varepsilon v\left[\Psi(v_{0})-v_{0}\Psi'(v_{0})+\Psi'(v_{0})v+\mathcal{O}\!\left((v-v_{0})^{2}\right)\right].
\end{align}
Rearranging the leading-order terms yields
\begin{equation}
\frac{\mathrm{d}v}{\mathrm{d}\tau}=\gamma v\left(1-\frac{v}{\kappa}\right)+\varepsilon v\,\mathcal{O}\!\left((v-v_{0})^{2}\right),
\label{eq:scalar_v_proof_2}
\end{equation}
where
\begin{align}
\gamma
&=
\varepsilon\left(\Psi(v_{0})-v_{0}\Psi'(v_{0})\right),
\\[0.5em]
\kappa
&=
v_{0}-\frac{\Psi(v_{0})}{\Psi'(v_{0})},
\end{align}
provided $\Psi'(v_{0})\neq 0$. Next, we express the error term in Eq. \eqref{eq:scalar_v_proof_2} in terms of time $\tau$ instead of the difference $v-v_{0}$. 

Since $v$ is differentiable and $v(0)=v_{0}$, we have
\begin{align}
v(\tau)-v_{0}=\left.\frac{\mathrm{d}v}{\mathrm{d}\tau}\right|_{\tau=0}\tau+\mathcal{O}(\tau^{2}),
\qquad \tau\to 0,
\end{align}
and therefore
\begin{align}
(v(\tau)-v_{0})^{2}=\mathcal{O}(\tau^{2}).
\end{align}
Since $v$ lies in a neighbourhood of $v_2^\star$ and $v_2^\star=\mathcal{O}(1/\varepsilon)$,
we have
\begin{align}
\varepsilon v=\mathcal{O}(1).
\end{align}
It follows that
\begin{align}
\varepsilon v\,\mathcal{O}\!\left((v-v_{0})^{2}\right)=\mathcal{O}\!\left((v-v_{0})^{2}\right)=\mathcal{O}(\tau^{2}),
\end{align}
so the scalar equation in Eq. \eqref{eq:scalar_v_proof_2} reduces to
\begin{align}
\frac{\mathrm{d}v}{\mathrm{d}\tau}
=
\gamma v\left(1-\frac{v}{\kappa}\right)+\mathcal{O}(\tau^{2}).
\end{align}
Neglecting the higher-order term, we obtain the logistic growth equation
\begin{align}
\frac{\mathrm{d}v}{\mathrm{d}\tau}
=
\gamma v\left(1-\frac{v}{\kappa}\right),
\end{align}
and integrating with respect to $\tau$ yields
\begin{align}
v(\tau)
=
\frac{\kappa}{1+\left(\frac{\kappa-v_{0}}{v_{0}}\right)e^{-\gamma\tau}}.
\end{align}
Since the neglected term is of order $\mathcal{O}(\tau^{2})$, the resulting solution satisfies
\begin{align}
v(\tau)
=
\frac{\kappa}{1+\left(\frac{\kappa-v_{0}}{v_{0}}\right)e^{-\gamma\tau}}
+
\mathcal{O}(\tau^{3}),
\qquad \tau\to 0,
\end{align}
which proves the claim.
\end{proof}

This result places the logistic growth law in context. Far from the carrying capacity, i.e.~when $v \ll \kappa$, the quadratic term becomes negligible and the dynamics reduce to exponential growth. In contrast, as $v$ approaches $\kappa$, the nonlinear term dominates and the growth saturates. Thus, exponential and logistic growth appear as two complementary regimes of the same underlying dynamics, depending on the proximity of the system to the TSS.

The logistic approximation in Theorem~\ref{thm:logistic} also admits a clear mechanistic interpretation. Near the TSS, the carrying capacity satisfies $\kappa \approx v_2^\star$, so that the logistic law saturates towards the TSS concentration of the original heterodimer model. Thus, saturation in the logistic approximation corresponds to convergence towards the TSS of the full system. In contrast to the Mi-phase, where the toxic species grows exponentially with growth rate $\lambda = f(v_0)u_0 - 1$, the To-phase is therefore characterised by saturation rather than rapid growth. Indeed, logistic growth reduces to exponential growth only when $v \ll \kappa$, whereas in the To-phase we assume that $v$ lies in a neighbourhood of $v_2^\star \approx \kappa$.

Taken together, these results show that the Mi- and To-phases describe two distinct dynamical regimes: exponential take-off away from the HSS and logistic saturation towards the TSS. This distinction provides a unified interpretation of the HeMiTo dynamics framework and forms the basis for the discussion of their biological implications in the following section.

\section{Discussion}\label{sec:dis}
Together with our previous work~\cite{borgqvist2025HeMiTo}, the present study provides a complete analytical characterisation of prion-like toxicity within the HeMiTo dynamics framework. In our earlier work, we introduced a non-dimensionalisation yielding a natural perturbation parameter, used linear stability analysis to identify conditions under which the toxic species exhibits sigmoidal growth, and proposed the HeMiTo framework to describe the dynamics qualitatively. While the He-phase was characterised analytically, the Mi-phase was only described numerically and the To-phase was inferred from linear stability arguments.

In the present work, we close this gap. We derive exact inner solutions for the Mi-phase and match them to the outer solutions from the He-phase, thereby explaining the apparent concave-like trajectory of the healthy species and identifying its maximum given by the HSS value $c_{1}/c_{2}$. Moreover, we establish explicit and mechanistically interpretable conditions for exponential growth of the toxic species in the Mi-phase, namely $u_0 > 1/f(v_0)$, with growth rate $f(v_0)u_0 - 1$. Finally, we formalise a previously heuristic argument by Fornari et al.~\cite{fornari2019prion} and show that, in the To-phase near the TSS, the dynamics reduce to a logistic growth law with carrying capacity given by the TSS value for the toxic species. Taken together, these results reveal that the HeMiTo dynamics consist of two qualitatively distinct regimes: the initial He- and Mi-phases centred around the HSS characterised by exponential take-off of the toxic species driven by prion-like conversion of the healthy species, followed by a saturation phase, namely the To-phase, governed by convergence towards the TSS. This unified picture suggests that prion-like dynamics can be described, within each phase, by simpler analytical functions that retain a clear mechanistic interpretation. Such simple analytical functions are particularly attractive when the goal is not only to understand but also to predict disease progression.

A widely used hypothetical model of neurodegenerative disease progression is the sigmoidal trajectory of biomarkers proposed by Jack et al.~\cite{jack2010hypothetical,jack2013update}. Our results provide a mechanistic foundation for this hypothetical model within our class of heterodimer models based on the prion-like hypothesis, showing how exponential growth of the toxic species in the Mi-phase transitions naturally into logistic saturation in the To-phase. While statistical approaches have been successfully used to quantify correlations between biomarkers—such as the precedence of tau accumulation over amyloid $\beta$ demonstrated by Therneau et al.~\cite{therneau2021relationships}—they are inherently limited in their ability to capture underlying biological mechanisms and to extrapolate beyond observed data. In contrast, mechanistic models derived from scientific laws, such as the law of mass action, offer interpretable parameters and predictive capabilities grounded in biological processes.

This distinction is particularly relevant in light of recent advances in longitudinal biomarker measurements, such as repeated plasma p-tau217 sampling~\cite{kirsebom2025repeated}, which reveal heterogeneous yet predictable trajectories of disease progression. In such settings, mechanistic models provide a natural framework for forecasting, as they capture the underlying dynamical laws governing biomarker evolution. Combined with modern inference frameworks such as PINTS~\cite{clerx2019Pints}, this opens the possibility of calibrating prion-like heterodimer models to patient-specific data and predicting future disease trajectories. The HeMiTo framework therefore not only explains the qualitative phases of disease progression, but also provides a principled basis for quantitative prediction. Developing and validating such forecasting approaches in clinical settings represents a promising direction for future work.

\section*{Funding}
JGB is funded by a grant from the Wenner-Gren foundations (Grant number: FT2023-0005).



\begin{thebibliography}{10}

\bibitem{prusiner1998prion}
Stanley~B Prusiner, Michael~R Scott, Stephen~J DeArmond, and Fred~E Cohen.
\newblock Prion protein biology.
\newblock {\em cell}, 93(3):337--348, 1998.

\bibitem{alyenbaawi2020prion}
Hadeel Alyenbaawi, W~Ted Allison, and Sue-Ann Mok.
\newblock Prion-like propagation mechanisms in tauopathies and traumatic brain
  injury: challenges and prospects.
\newblock {\em Biomolecules}, 10(11):1487, 2020.

\bibitem{thompson2020protein}
Travis~B Thompson, Pavanjit Chaggar, Ellen Kuhl, Alain Goriely, and
  Alzheimer’s Disease~Neuroimaging Initiative.
\newblock Protein-protein interactions in neurodegenerative diseases: A
  conspiracy theory.
\newblock {\em PLoS computational biology}, 16(10):e1008267, 2020.

\bibitem{lemarre2020unifying}
Paul Lemarre, Laurent Pujo-Menjouet, and Suzanne~S Sindi.
\newblock A unifying model for the propagation of prion proteins in yeast
  brings insight into the [psi+] prion.
\newblock {\em PLoS computational biology}, 16(5):e1007647, 2020.

\bibitem{parsons:hal-04178969}
Todd~L Parsons and David J~D Earn.
\newblock {Analytical approximations for the phase plane trajectories of the
  SIR model with vital dynamics}.
\newblock working paper or preprint, August 2023.

\bibitem{borgqvist2025HeMiTo}
Johannes~G Borgqvist and Christoffer Gretarsson~Alexandersen.
\newblock Hemito-dynamics: a characterization of mammalian prion toxicity using
  non-dimensionalization, linear stability and perturbation analyses.
\newblock {\em Mathematical Medicine and Biology: A Journal of the IMA},
  42(2):159--175, 11 2024.

\bibitem{gerlee2022weak}
Philip Gerlee.
\newblock Weak selection and the separation of eco-evo time scales using
  perturbation analysis.
\newblock {\em Bulletin of Mathematical Biology}, 84(5):52, 2022.

\bibitem{fornari2019prion}
Sveva Fornari, Amelie Sch{\"a}fer, Mathias Jucker, Alain Goriely, and Ellen
  Kuhl.
\newblock Prion-like spreading of alzheimer’s disease within the brain’s
  connectome.
\newblock {\em Journal of the Royal Society Interface}, 16(159), 2019.

\bibitem{jack2010hypothetical}
Clifford~R Jack, David~S Knopman, William~J Jagust, Leslie~M Shaw, Paul~S
  Aisen, Michael~W Weiner, Ronald~C Petersen, and John~Q Trojanowski.
\newblock Hypothetical model of dynamic biomarkers of the alzheimer's
  pathological cascade.
\newblock {\em The Lancet Neurology}, 9(1):119--128, 2010.

\bibitem{jack2013update}
Clifford~R Jack~Jr, David~S Knopman, William~J Jagust, Ronald~C Petersen,
  Michael~W Weiner, Paul~S Aisen, Leslie~M Shaw, Prashanthi Vemuri, Heather~J
  Wiste, Stephen~D Weigand, et~al.
\newblock Update on hypothetical model of alzheimer’s disease biomarkers.
\newblock {\em Lancet neurology}, 12(2):207, 2013.

\bibitem{therneau2021relationships}
Terry~M Therneau, David~S Knopman, Val~J Lowe, Hugo Botha, Jonathan
  Graff-Radford, David~T Jones, Prashanthi Vemuri, Michelle~M Mielke,
  Christopher~G Schwarz, Matthew~L Senjem, et~al.
\newblock Relationships between $\beta$-amyloid and tau in an elderly
  population: An accelerated failure time model.
\newblock {\em Neuroimage}, 242:118440, 2021.

\bibitem{kirsebom2025repeated}
Bj{\o}rn-Eivind Kirsebom, Fernando Gonzalez-Ortiz, Sinthujah Vigneswaran, Geir
  Br{\aa}then, Ragnhild~Eide Skogseth, Berglind G{\'\i}slad{\'o}ttir, Peter
  Harrison, Jonas~Alexander Jarholm, Lene P{\aa}lhaugen, Arvid Rongve, et~al.
\newblock Repeated plasma p-tau217 measurements to monitor clinical progression
  heterogeneity.
\newblock {\em Alzheimer's \& Dementia}, 21(5):e70319, 2025.

\bibitem{clerx2019Pints}
Michael Clerx, Martin Robinson, Ben Lambert, Chon~Lok Lei, Sanmitra Ghosh,
  Gary~R Mirams, and David~J Gavaghan.
\newblock Probabilistic inference on noisy time series ({PINTS}).
\newblock {\em Journal of Open Research Software}, 7(1):23, 2019.

\end{thebibliography}
\end{document}